\newtheorem{theorem}{Theorem}
\newtheorem{conclusion}[theorem]{Conclusion}
\begin{document}

\title{Microscopic Unitarity and the Quantization of Black Hole Evaporation Time}
\author{Ahmad Adel Abutaleb \\
Mansoura University, Faculty of Science, Egypt}
\date{}
\maketitle

\begin{abstract}
This work presents an effective microscopic, time-dependent Hamiltonian framework for investigating information dynamics during black hole evaporation. While current approaches often rely on gravitational path integrals or statistical ensembles to recover the Page curve, our model provides an explicit, unitary quantum-mechanical evolution of the radiation. We utilize the Independent Unitary Pairing assumption, where the total Hilbert space is decomposed into $N_{\text{total}}$ mutually independent bipartite subsystems. Each subsystem consists of a single interior qubit interacting unitarily with a single radiation qubit, ensuring strict global microscopic unitarity (Von Neumann entropy $=0$) throughout the process. To reconcile this with macroscopic thermodynamics, we introduce a methodology called ``Fermion-like Occupancy Bound'', which can be considered as a Holographic Binary Capacity Constraint, where each radiation channel is modeled as a two-level system representing the fundamental unit of information. This truncation, justified by the holographic principle at the Planck scale, enforces a maximum entropy bound of $\ln 2$ per channel, which naturally yields the entropy turnaround and final state purification.

The central result of this framework is the derivation of a Quantum Condition for Unitarity ($\theta _{j,a}(\mathbf{t_{\text{evap}}})=n_{j,a}\cdot \frac{\pi }{2}$), which couples microscopic phase evolution with macroscopic observables. By combining this condition with the semiclassical Hawking mass-loss law, we establish a fundamental scaling relation between the initial black hole mass $M_{0}$, the coupling strength $\alpha$ and the total evaporation time $\mathbf{t_{\text{evap}}}$. The results suggest a potential quantization of the evaporation time, where this evaporation time is expressed as discrete multiples of a fundamental temporal unit. Numerical simulations corroborate the theoretical framework, demonstrating the emergence of the Page-like curve and providing a mechanism for how information is preserved.
\end{abstract}

\section{Introduction}
A black hole is defined as a region of spacetime with extremely high gravity, resulting from the collapse of a massive object. In the 1970s, some physicists (including Hawking) formulated four laws of black hole mechanics. The second law, also known as the event horizon area law, states that the area of the event horizon of any black hole must either increase or remain constant over time. Bekenstein noted the similarity between this law and the second law of thermodynamics, which states that the entropy of any closed system can never decrease; it always increases or at least remains constant. This led him to hypothesize that black holes possess entropy proportional to the area of their event horizon, and that most of the universe's entropy is held by black holes. When Hawking considered the effects of quantum mechanics near the event horizon ``Quantum Vacuum Fluctations'', he found that, as a result of the black hole's immense gravity, one of the two particles from the hypothetical particle-antiparticle pairs carrying negative energy is pulled in, leading to a decrease in the black hole's mass and its ``evaporation,'' while the other particle manages to escape, transforming into a real particle that can be detected, known as Hawking radiation. This led to what is called the black hole information paradox \cite{Bekenstein1973,Hawking1975,Wald2001}.

The black hole information paradox arises from the apparent tension between the unitary evolution required by quantum mechanics and the thermal character of Hawking radiation predicted by semiclassical calculations \cite{Hawking1976}: if an initially pure state collapses to form a black hole and the Hawking radiation remains exactly thermal, then the final radiation state would be mixed and the information would be lost. Don Page addressed the issue of information loss by studying the average entropy of subsystems of a random pure state. He predicted the characteristic `Page curve' for the radiation entropy \cite{Page1993Info,Page1993Avg,Almheiri2017Overview}, which initially rises, peaks near the halfway (Page) time, and then returns to zero as evaporation completes.

Over the past two decades quantum-information-theoretic toy models have clarified aspects of how information might be returned to the radiation. In particular, Hayden and Preskill showed that a rapidly scrambling, unitary black hole behaves like an efficient quantum encoder, i.e., information dumped into an old black hole can be retrieved quickly from subsequent Hawking radiation, illustrating how unitarity can be compatible with rapid information release under appropriate assumptions about internal dynamics \cite{Hayden2007}. Simultaneously, analyses of entanglement monogamy, the principle that a quantum particle cannot be fully entangled with two different particles at the same time, led to the so called ``firewall'' paradox \cite{Almheiri2013}, which sharpened the conceptual tension by arguing that preserving both unitarity and the smoothness of the horizon necessitates mutually incompatible entanglement relations between late radiation, early radiation, and interior partners.

Recent progress has taken place on two complementary fronts. From the gravitational path-integral side, developments involving replica wormholes and the quantum extremal surface (island) prescription have produced semiclassical calculations that reproduce Page-like behavior for evaporating black holes \cite{Penington2020,Almheiri2019}, thereby showing that semiclassical gravity (when carefully treated with certain replica saddles) can in some contexts yield unitary-like entropy curves for the radiation. These results provide an appealing semiclassical mechanism for the Page transition and have clarified the role of new topologies in the gravitational path integral. On the other hand, many microscopic or information-theoretic demonstrations of Page-like behavior rely on statistical ensembles (e.g., Haar-random unitaries or random states) or on idealized decoupling arguments rather than on explicit, time dependent Hamiltonians acting on well defined physical states. However, a gap remains in establishing an explicit, time-dependent, and microscopically unitary quantum-mechanical model that reproduces the Page-like curve while connecting the quantum dynamics directly to the macroscopic black hole evaporation law. Previous microscopic models often rely on averaged states or large ensembles rather than an explicit Hamiltonian formulation.

Our work addresses this by presenting a self-consistent, Hamiltonian-based unitary framework. This framework explicitly demonstrates how microscopic unitarity can be perfectly maintained while statistically generating the macroscopic thermal properties and the required information turnaround, thus providing a microscopic demonstration of the information preservation during black hole evaporation.

The foundation of this framework is the \textit{Independent Unitary Pairing} assumption, that is the total Hilbert space decomposes into $N_{\text{total}}$ mutually independent bipartite two-qubit subsystems. This microscopic decomposition guarantees an exact factorization of the total radiation entropy:
\begin{equation*}
S_{\text{Rad}}^{\text{Total}} = \sum_{j,a} S_{j,a}
\end{equation*}
where each contribution $S_{j,a}(t)$ is computed rigorously as the von Neumann entanglement entropy of the corresponding reduced density matrix. Because each elementary interaction is unitary, the global von Neumann entropy of the combined system remains strictly zero throughout the evolution:
\begin{equation*}
S_{\text{vN}}(\rho_{\text{total}}) = 0
\end{equation*}
ensuring that information is never lost at the microscopic level.

A key conceptual element of our construction is inspired by ideas appearing in the literature \cite{Osuga2016,Broda2020,Chu2024}. We consider each radiation mode as an effective information carrier instead of focusing on the actual number of particles. To satisfy the necessary mathematical condition for unitarity and to reproduce the Page curve, the Hilbert space of each radiation mode is truncated to a \textit{Two-level Qubit System}.

This representation defines a binary information state: the state $|0\rangle$ represents the unexcited mode, carrying no information/entanglement originating from the black hole, while the state $|1\rangle$ represents the excited/occupied mode, carrying a single unit of quantum information. Consequently, the physical details related to the true multi-boson occupation ($n_{j}=1,2,3,\ldots $) become irrelevant, as any excitation is represented by $|1\rangle$. We refer to this methodology as the ``Fermion-like Occupancy Bound'', motivated by the holographic principle, where each fundamental Planck-area cell on the horizon acts as a discrete bit of information, which imposes a maximum entropy of $\ln 2$ for each microscopic channel. This maximum limit is what ensures that the total entropy of the radiation is bounded and can reach a turning point and begin to decrease (purification), which is the essential condition for achieving unitary evolution in quantum mechanics. Thus, the assumption of the binary representation $\{|0\rangle ,|1\rangle \}$ for each mode does not contradict the physical fact that Hawking radiation is composed of bosons, which can occupy the same mode in unlimited numbers.

We suggest a bridge between microscopic quantum dynamics and macroscopic thermodynamic observables. For each microscopic radiation channel, the quantum excitation probability is shown to be mathematically identical to the expectation value of its occupation number. This identity allows the unbounded occupation characteristic of bosonic Hawking radiation to emerge naturally from the statistical aggregation of many independent, yet strictly binary-bounded, qubit channels---an identification justified by an ergodic coarse-graining argument. The macroscopic radiation mode at frequency $\omega _{j}$ is therefore modeled as the collective ensemble of its constituent microscopic channels.

Despite the rigidity of microscopic unitarity, macroscopic entropy growth arises statistically as $S_{\text{Rad}}^{\text{Total}}(t)$ accumulates through coarse-graining of continuously increasing microscopic entanglement. To guarantee a physically valid entropy turnaround, the model incorporates a holographic interpretation in which the black hole horizon area is effectively quantized into $N_{\text{total}}$ fundamental information-carrying channels. Each channel behaves as a qubit and thus enforces a binary occupancy bound. This bound sets an upper limit on the entropy of each macroscopic radiation mode:
\begin{equation*}
S_{j}^{\text{max}}=N_{j}\ln 2,
\end{equation*}
where $N_j$ denotes the number of microscopic channels associated with frequency $\omega_j$. This maximum is essential: it ensures that the radiation entropy necessarily reaches a peak and subsequently decreases---reproducing the characteristic Page curve expected from unitary evaporation.

The most important outcome of this framework is the derivation of a \textit{Quantum Condition for Unitarity}. Requiring the final total radiation entropy to return strictly to zero imposes a quantization rule on the integrated unitary phase of each microscopic channel:
\begin{equation*}
\theta _{j,a}(\mathbf{t_{\text{evap}}})=n_{j,a}\cdot \frac{\pi }{2},\quad n_{j,a}\in \mathbb{Z}.
\end{equation*}
By combining this quantization condition with the classical Hawking mass-loss law, the model yields an explicit relation connecting the initial black hole mass $M_0$, the microscopic coupling strengths $\alpha_{j,a}$, and the total evaporation time $t_{\text{evap}}$. This result implies not only a potential quantization of the evaporation time but also a strict scaling relation:
\begin{equation*}
\alpha_{j,a} \propto n_{j,a}
\end{equation*}
between the microscopic coupling and the final accumulated quantum phase---providing a nontrivial constraint on the information-processing dynamics at the event horizon.

In summary, the framework demonstrates how deeply quantum, microscopically unitary dynamics can reproduce some macroscopic thermodynamic features of black hole evaporation.

\section{The Microscopic Unitary Framework}
In this framework, we set $\hbar =c=k_{B}=G=1$. The microscopic unitary model is built upon the assumption of \textbf{Independent Unitary Pairing} to guarantee the factorization of the total radiation entropy. We postulate that the total system comprises $N_{\text{total}}$ microscopic pairs, where each pair consists of one interior degree of freedom and one radiation channel.

Each microscopic pair is indexed by $(j,a)$, where: $j=1,\dots ,J$ indexes the frequency mode ($\omega _{j}$) and $a=1,\dots ,N_{j}$ indexes individual channels within mode $j.$ Crucially, each microscopic radiation channel $\text{rad}_{j,a}$ is coupled exclusively to its own unique interior qubit $\text{in}_{j,a}$, and the interactions are non-overlapping in the interior Hilbert space. This assumption simplifies the total system into $N_{\text{total}}$ mutually independent bipartite systems.

We model the interaction for a single microscopic pair $(j,a)$ as a two-qubit system evolving under a time-dependent Hamiltonian $H_{j,a}(t)$. The Hilbert space of this pair is:
\begin{equation}
\mathcal{H}_{j,a}^{\text{total}}=\mathcal{H}_{\text{in}_{j,a}}\otimes \mathcal{H}_{\text{rad}_{j,a}}.
\end{equation}
At time $t=0$, the joint system is prepared in the vacuum state:
\begin{equation}
|\Psi _{j,a}(0)\rangle =|0\rangle _{\text{in}}\otimes |0\rangle _{\text{rad}}\equiv |00\rangle .
\end{equation}
The time evolution is governed by an interaction Hamiltonian of the form:
\begin{equation}
H_{j,a}(t)=g_{j,a}(t)A,\quad \text{where }A=\sigma _{x}\otimes \sigma _{x},
\end{equation}
and the coupling function is defined as:
\begin{equation}
g_{j,a}(t)=\frac{\alpha _{j,a}}{M(t)^{2}},
\end{equation}
where $M(t)$ is the black hole mass and $\alpha _{j,a}$ is the coupling strength.

The unitary evolution operator is defined using the time-ordering operator $\mathcal{T}$,
\begin{equation}
U_{_{j,a}}(t)=\mathcal{T}\exp \left( -i\int_{0}^{t}H_{_{j,a}}(\tau )d\tau \right).
\end{equation}
Since the Hamiltonian commutes with itself at all times, i.e.,
\begin{equation}
\lbrack H_{j,a}(t_{1}),H_{j,a}(t_{2})]=g_{j,a}(t_{1})g_{j,a}(t_{2})[A,A]=0,
\end{equation}
the unitary evolution operator simplifies to:
\begin{equation}
U_{j,a}(t)=\exp \left( -i\int_{0}^{t}g_{j,a}(\tau )d\tau \cdot A\right).
\end{equation}
Defining the integrated phase as:
\begin{equation}
\theta _{j,a}(t)=\int_{0}^{t}g_{j,a}(\tau )d\tau,
\end{equation}
we have
\begin{equation}
U_{_{j,a}}(t)=e^{-i\theta _{_{j,a}}(t)A}.
\end{equation}
The Taylor series expansion for the exponential can be written as:
\begin{equation}
e^{-i\theta _{_{j,a}}A}=\sum_{k=0}^{\infty }\frac{(-i\theta _{_{j,a}}A)^{k}}{k!}=\sum_{m=0}^{\infty }\frac{(-i\theta _{_{j,a}}A)^{2m}}{(2m)!}+\sum_{m=0}^{\infty }\frac{(-i\theta _{_{j,a}}A)^{2m+1}}{(2m+1)!}.
\end{equation}
Using $A^{2m}=I$ and $A^{2m+1}=A$, we get:
\begin{equation}
e^{-i\theta _{_{j,a}}A}=I\sum_{m=0}^{\infty }\frac{(-1)^{m}\theta _{j,a}^{2m}}{(2m)!}-iA\sum_{m=0}^{\infty }\frac{(-1)^{m}\theta _{j.a}^{2m+1}}{(2m+1)!}.
\end{equation}
Recognizing the cosine and sine series, we obtain:
\begin{equation}
U_{j,a}(t)=\cos (\theta _{j,a}(t))I-i\sin (\theta _{j,a}(t))A.
\end{equation}
Now, when we apply the unitary operator to the initial state, we get:
\begin{equation}
|\Psi _{j,a}(t)\rangle =U_{j,a}(t)|00\rangle =\cos (\theta _{j,a}(t))|00\rangle -i\sin (\theta _{j,a}(t))|11\rangle.
\end{equation}
The reduced density matrix for the radiation subsystem is found by taking the partial trace over the interior degree of freedom, i.e.,
\begin{equation}
\rho _{\text{rad},j,a}(t)=\operatorname{Tr}_{\text{in}}[|\Psi _{j,a}(t)\rangle\langle \Psi _{j,a}(t)|].
\end{equation}
The partial trace yields the diagonal matrix:
\begin{equation}
\rho _{\text{rad},j,a}(t)=\cos ^{2}(\theta _{j,a}(t))|0\rangle \langle 0|+\sin ^{2}(\theta _{j,a}(t))|1\rangle \langle 1|.
\end{equation}
It is clear that the eigenvalues are $\lambda _{1}=\cos ^{2}(\theta _{j,a}(t))$ and $\lambda _{2}=\sin ^{2}(\theta _{j,a}(t))$.

We define the excitation probability as the eigenvalue corresponding to the excited state $|1\rangle$, i.e.
\begin{equation}
p_{j,a}(t)=\sin ^{2}\left( \theta _{j,a}(t)\right) =\sin ^{2}\left( \int_{0}^{t}g_{j,a}(\tau )d\tau \right).
\end{equation}
Now, the entanglement entropy for the microscopic channel is derived using the von Neumann entropy formula:
\begin{equation}
S_{j,a}(t)=-\operatorname{Tr}[\rho _{\text{rad},j,a}(t)\ln \rho _{\text{rad},j,a}(t)],
\end{equation}
so, for this two-level system, this simplifies to the binary entropy formula:
\begin{equation}
S_{j,a}(t)=-p_{j,a}(t)\ln p_{j,a}(t)-(1-p_{j,a}(t))\ln (1-p_{j,a}(t)).
\end{equation}

\section{From Microscopic to Macroscopic}
We define a ``macroscopic channel $j$'' as a collection of microscopic channels, where each microscopic channel has a frequency $\omega _{j}$, and the number of these microscopic channels is $N_{j}.$ For this macroscopic channel, we define:
\begin{equation}
p_{j}(t)=\frac{1}{N_{j}}\sum_{a=1}^{N_{j}}p_{j,a}(t),
\end{equation}
where $p_{j,a}(t)=\sin ^{2}(\theta _{j,a}(t))$ is the microscopic quantum excitation probability. So, $p_{j}(t)$ can be considered as the average occupation ratio for the macroscopic channel $j$ (which contains $N_{j}$ microscopic channels).

Also, we define the total occupied number of quanta $n_{j}^{(\text{macro})}(t)$ as follows:
\begin{equation}
n_{j}^{(\text{macro})}(t)=\sum_{a=1}^{N_{j}}n_{j,a}^{(\text{micro})}(t),
\end{equation}
where $n_{j,a}^{(\text{micro})}(t)$ represents the actual state of qubit number $a$ at time $t.$ Since it is a qubit, it is a binary quantum system with two fundamental states: the ground state $|0\rangle $ and the excited state $|1\rangle$, so $n_{j,a}^{(\text{micro})}(t)\in \{0,1\}$, where $n_{j,a}^{(\text{micro})}(t)=1$ means that qubit $a$ which has the frequency $\omega _{j}$ is in the excited state $|1\rangle$, and $n_{j,a}^{(\text{micro})}(t)=0$ means that qubit $a$ is in the ground state $|0\rangle$.

\begin{theorem}
The expectation value of $n_{j,a}^{(\text{micro})}(t)$,
\begin{equation}
\mathbb{E}[n_{j,a}^{(\text{micro})}(t)]=p_{j,a}(t).
\end{equation}
\end{theorem}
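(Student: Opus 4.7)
The plan is to treat $n_{j,a}^{(\text{micro})}(t)$ as a two-valued (Bernoulli) random variable corresponding to the projective measurement of the radiation qubit in the computational basis $\{|0\rangle,|1\rangle\}$, and then compute its expectation directly from the reduced density matrix that was already derived in the preceding section.

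First I would set up the measurement framework explicitly: identify $n_{j,a}^{(\text{micro})}(t)$ with the number operator $\hat{n}_{\text{rad}} = |1\rangle\langle 1|$ acting on $\mathcal{H}_{\text{rad}_{j,a}}$, so that its spectrum is exactly $\{0,1\}$ matching the stated binary occupancy. By the Born rule, the probability of observing the outcome $n=1$ equals the diagonal entry $\langle 1|\rho_{\text{rad},j,a}(t)|1\rangle$, and similarly for $n=0$.

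Next I would substitute the explicit reduced density matrix from the earlier derivation,
\begin{equation*}
\rho_{\text{rad},j,a}(t)=\cos^{2}(\theta_{j,a}(t))|0\rangle\langle 0|+\sin^{2}(\theta_{j,a}(t))|1\rangle\langle 1|,
\end{equation*}
to read off $P(n_{j,a}^{(\text{micro})}=1)=\sin^{2}(\theta_{j,a}(t))$ and $P(n_{j,a}^{(\text{micro})}=0)=\cos^{2}(\theta_{j,a}(t))$. Then the expectation is computed as the one-line sum
\begin{equation*}
\mathbb{E}[n_{j,a}^{(\text{micro})}(t)] = 0\cdot\cos^{2}(\theta_{j,a}(t)) + 1\cdot\sin^{2}(\theta_{j,a}(t)) = \sin^{2}(\theta_{j,a}(t)),
\end{equation*}
which equals $p_{j,a}(t)$ by the definition given in the text. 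Equivalently, one can compute $\operatorname{Tr}[\rho_{\text{rad},j,a}(t)\,\hat{n}_{\text{rad}}]$ and obtain the same value; I would include both expressions to emphasize that the identification of the excitation probability with the number-operator expectation is basis-independent.

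There is no real obstacle here: the result is essentially an unpacking of the Born rule for a diagonal two-level density matrix. The only point that deserves a sentence of care is the conceptual identification between the classical-looking random variable $n_{j,a}^{(\text{micro})}(t)\in\{0,1\}$ appearing in the macroscopic sum and the quantum number operator $\hat{n}_{\text{rad}}=|1\rangle\langle 1|$; I would make this identification explicit at the start so that the symbol $\mathbb{E}[\cdot]$ is unambiguously the quantum expectation value $\langle\hat{n}_{\text{rad}}\rangle_{\rho_{\text{rad},j,a}(t)}$, thereby justifying why a Bernoulli-style computation is legitimate in this framework.
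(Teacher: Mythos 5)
Your proposal is correct and follows essentially the same route as the paper: both identify $n_{j,a}^{(\text{micro})}$ with the number operator $\hat{n}_{j,a}=|1\rangle\langle 1|$ and evaluate its expectation against the diagonal reduced density matrix, yielding $\sin^{2}(\theta_{j,a}(t))=p_{j,a}(t)$. The only cosmetic difference is that you phrase the computation via the Born rule for a Bernoulli outcome while the paper writes the trace $\operatorname{Tr}[\rho_{j,a}(t)\,\hat{n}_{j,a}]$ directly; these are the same calculation.
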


\begin{proof}
Since in the microscopic formulation, each radiative mode $j,a$ is modeled as a two-level quantum system (a qubit), with basis states $|0\rangle $ (ground state) and $|1\rangle $ (excited state), we can define the number operator associated with this mode as:
\begin{equation*}
\hat{n}_{j,a}=|1\rangle \langle 1|.
\end{equation*}
This operator measures the occupation of the excited state and acts on the computational basis as
\begin{equation*}
\hat{n}_{j,a}|0\rangle =0,\quad \hat{n}_{j,a}|1\rangle =|1\rangle,
\end{equation*}
indicating that its eigenvalues are 0 and 1, corresponding respectively to the unoccupied and occupied configurations of the qubit.

To evaluate the expectation value of $\hat{n}_{j,a}$, we consider the density matrix of the $j,a$-th qubit at time $t$:
\begin{equation*}
\rho _{j,a}(t)=[1-p_{j,a}(t)]|0\rangle \langle 0|+p_{j,a}(t)|1\rangle \langle 1|.
\end{equation*}
The expectation value of the number operator is then obtained by taking the trace:
\begin{equation*}
\langle \hat{n}_{j,a}\rangle =\operatorname{Tr}[\rho _{j,a}(t)\hat{n}_{j,a}]=\operatorname{Tr}\left( [(1-p_{j,a}(t))|0\rangle \langle 0|+p_{j,a}(t)|1\rangle \langle 1|]|1\rangle \langle 1|\right).
\end{equation*}
Evaluating the trace in the basis $\{|0\rangle ,|1\rangle \}$, we find
\begin{equation*}
\langle 0|\rho _{j,a}(t)\hat{n}_{j,a}|0\rangle =0,\quad \langle 1|\rho _{j,a}(t)\hat{n}_{j}|1\rangle =p_{j,a}(t).
\end{equation*}
Hence,
\begin{equation*}
\langle \hat{n}_{j,a}\rangle =0+p_{j,a}(t)=p_{j,a}(t).
\end{equation*}
\end{proof}
This establishes that $p_{j,a}(t)$ represents the expected occupation probability of the excited state in the two-level subsystem.

Now, if we take the expectation value of $n_{j}^{(\text{macro})}(t)$, we get:
\begin{equation}
\mathbb{E}[n_{j}^{(\text{macro})}(t)]=\mathbb{E}\sum_{a=1}^{N_{j}}n_{j,a}^{(\text{micro})}(t)=\sum_{a=1}^{N_{j}}\mathbb{E}[n_{j,a}^{(\text{micro})}(t)]=\sum_{a=1}^{N_{j}}p_{j,a}(t)=N_{j}\text{ }p_{j}(t).
\end{equation}
Notice that if we define:
\begin{equation}
p_{j}^{(1)}(t):=\frac{1}{N_{j}}\sum_{a=1}^{N_{j}}n_{j,a}^{(\text{micro})}(t),
\end{equation}
then we find the following:
\begin{equation}
\mathbb{E}[p_{j}^{(1)}(t)]=\mathbb{E}\left[ \frac{1}{N_{j}}\sum_{a=1}^{N_{j}}n_{j,a}^{(\text{micro})}(t)\right] =\frac{1}{N_{j}}\sum_{a=1}^{N_{j}}\mathbb{E}[n_{j,a}^{(\text{micro})}(t)]=\frac{1}{N_{j}}\sum_{a=1}^{N_{j}}p_{j,a}(t)=p_{j}(t),
\end{equation}
in other words: $p_{j}(t)$ is the expected value of $p_{j}^{(1)}(t).$

Notice that $p_{j}^{(1)}(t)$ and $p_{j}(t)$ are numerically equal if $N_{j}$ is large: For a large ensemble, the empirical average $p_{j}^{(1)}(t)$ will be very close to the expected average $p_{j}(t)$.

\section{Dynamics of Purity and Entropy in the Model}
Since the system evolves unitarily, it remains in a pure state $|\Psi _{j,a}(t)\rangle$, so the total density matrix is described as the projector onto this state:
\begin{equation*}
\rho _{\text{total},j,a}(t)=|\Psi _{j,a}(t)\rangle \langle \Psi _{j,a}(t)|,
\end{equation*}
where the total state vector is defined by the integrated phase $\theta _{j,a}(t)$:
\begin{equation*}
|\Psi _{j,a}(t)\rangle =\cos (\theta _{j,a}(t))|00\rangle -i\sin (\theta _{j,a}(t))|11\rangle,
\end{equation*}
with $\theta _{j,a}(t)=\int_{0}^{t}g_{j,a}(\tau )d\tau$.

The resulting $4\times 4$ matrix in the standard basis $\{|00\rangle ,|01\rangle ,|10\rangle ,|11\rangle \}$ is:
\begin{equation*}
\rho _{\text{total},j,a}(t)=
\begin{pmatrix}
\cos ^{2}(\theta _{j,a}(t)) & 0 & 0 & -i\cos (\theta _{j,a}(t))\sin (\theta _{j,a}(t)) \\
0 & 0 & 0 & 0 \\
0 & 0 & 0 & 0 \\
i\cos (\theta _{j,a}(t))\sin (\theta _{j,a}(t)) & 0 & 0 & \sin ^{2}(\theta _{j,a}(t))
\end{pmatrix}.
\end{equation*}

Now, since $\rho _{\text{total},j,a}(t)$ is the density matrix for a pure state, its total Von Neumann entropy is equal to zero, which satisfies the condition of information preservation:
\begin{equation*}
S_{\text{vN}}(\rho _{\text{total},j,a})=0\quad (\text{constant for all times }t).
\end{equation*}

To obtain the description of the microscopic radiation qubit ($\text{rad}_{j,a}$), the reduced density matrix is derived by taking the partial trace over the internal degrees of freedom:
\begin{equation*}
\rho _{\text{rad},j,a}(t)=\operatorname{Tr}_{\text{in}}[\rho _{\text{total},j,a}(t)]=\sum_{k\in \{0,1\}}\langle k|_{\text{in}}\rho _{\text{total},j,a}(t)|k\rangle _{\text{in}}.
\end{equation*}

The resulting $2\times 2$ matrix is:
\begin{equation*}
\rho _{\text{rad},j,a}(t)=
\begin{pmatrix}
\cos ^{2}(\theta _{j,a}(t)) & 0 \\
0 & \sin ^{2}(\theta _{j,a}(t))
\end{pmatrix}.
\end{equation*}

Defining the microscopic excitation probability as $p_{j,a}(t)=\sin^{2}(\theta _{j,a}(t))$, the entanglement entropy for this microscopic qubit is:
\begin{equation*}
S_{\text{rad},j,a}(t)=-\operatorname{Tr}[\rho_{\text{rad}, j, a}(t) \ln\rho_{\text{rad}, j, a}(t)]=-[\lambda _{1}\ln \lambda _{1}+\lambda _{2}\ln \lambda _{2}],
\end{equation*}
where its eigenvalues are $\lambda _{1}=\cos ^{2}(\theta_{j,a}(t))=1-p_{j,a}(t)$ and $\lambda _{2}=\sin ^{2}(\theta_{j,a}(t))=p_{j,a}(t)$, so we have:
\begin{equation*}
S_{\text{rad},j,a}(t)=-[(1-p_{j,a}(t))\ln (1-p_{j,a(t)})+p_{j,a}(t)\ln p_{j,a}(t)].
\end{equation*}

To form the macroscopic description, we coarse-grain the entropy of the microscopic radiation channels for each mode $j$, so the total coarse-grained radiation entropy for mode $j$ is the sum of the individual entanglement entropies:
\begin{equation}
S_{\text{Rad},j}^{\text{Total}}(t)=\sum_{a=1}^{N_{j}}S_{\text{rad},j,a}(t).
\end{equation}
Consequently, the total radiation entropy for the entire system is:
\begin{equation}
S_{\text{Rad}}^{\text{Total}}(t)=\sum_{j}S_{\text{Rad},j}^{\text{Total}}(t)=\sum_{j}\sum_{a=1}^{N_{j}}S_{\text{rad},j,a}(t).
\end{equation}
This equation demonstrates that the change in the total macroscopic radiation entropy ($S_{\text{Rad}}^{\text{Total}}$) is a statistical interpretation of the continuous microscopic entanglement process and its coarse-graining. This allows us to interpret the change in $S_{\text{Rad}}^{\text{Total}}(t)$ as a consequence of heat exchange in the macroscopic system, while strictly adhering to the quantum constraint that the global total entropy $S_{\text{vN}}(\rho _{\text{total}})$ must remain zero.

\section{The Holographic Framework and Microscopic Channels}
In this framework, the black hole event horizon is modeled as a holographic information network, where its discrete area elements encode the microscopic degrees of freedom responsible for the macroscopic dynamic behavior. The factor $\mathbf{N_j}$ denotes the number of independent microscopic channels (``information units effectively at the Planck scale'') dedicated to supporting the emission of a specific macroscopic radiation mode $j$ with frequency $\omega_j$. Each channel is treated as a two-level quantum system (``qubit''), representing the minimal binary degree of freedom capable of storing and releasing information. This identification of a Planck area patch with a qubit is presented as an effective truncation for modeling the microscopic Hilbert space structure; it is not intended as a literal geometric analysis of the precise gravitational quantum states.

Since the area of a Schwarzschild black hole event horizon is
\begin{equation}
A=4\pi R_{s}^{2}=\frac{16\pi G^{2}M^{2}}{c^{4}},
\end{equation}
the total number of microscopic information channels is proportional to the horizon area. In Planck units ($\hbar =c=G=k_{B}=1$), the Bekenstein-Hawking entropy is:
\begin{equation}
S_{\text{BH}}=\frac{A}{4l_{p}^{2}}.
\end{equation}
Assuming each binary channel contributes $\ln 2$ units of entropy, the holographic constraint requires the total number of channels to be:
\begin{equation}
N\mathbf{_{\text{total}}}=\frac{S_{\text{BH}}}{\ln 2}=\frac{A}{4l_{p}^{2}\ln 2}.
\end{equation}
Consequently, the microscopic channel structure is constrained by the relation:
\begin{equation}
\sum_{j}N_{j}=N_{\text{total}},
\end{equation}
which defines the partitioning of the total microscopic Hilbert space dimensions analyzed by frequencies. This decomposition into frequency-indexed subspaces is a modeling choice that reflects the physical reality that different Hawking modes couple to the horizon with different statistical weights.

To reflect the semi-classical thermal nature of Hawking radiation within this partitioning, we assume that the distribution of these microscopic channels $N_{j}$ across the frequency spectrum $\omega _{j}$ is governed by the Planckian weighting factor:
\begin{equation}
W(\omega _{j})=\frac{\omega _{j}^{2}}{e^{\omega _{j}/T}-1},
\end{equation}
where the specific number of microscopic channels $N_{j}$ for each mode is then assigned by normalizing these weights to match the total holographic information capacity of the horizon:
\begin{equation}
N_{j}=N_{\text{total}}\cdot \frac{W(\omega _{j})}{\sum_{k}W(\omega _{k})}.
\end{equation}

The microscopic structure is linked to the macroscopic radiation by the relation:
\begin{equation}
n_{j}(t)=\sum_{a=1}^{N_{j}}p_{j,a}(t),
\end{equation}
where $n_{j}(t)$ is the total number of quanta emitted in mode $j$ (the observable macroscopic quantity), and $p_{j,a}(t)\in \lbrack 0,1]$ is the microscopic occupation probability of the individual channel $a$ within mode $j$. Each channel is subject to a binary occupation constraint either ``occupied'' or ``empty'' which imposes the local restriction:
\begin{equation*}
0\leq p_{j,a}(t)\leq 1.
\end{equation*}
This constraint does not reflect Fermi statistics in the emitted radiation; rather, it is a constraint that maintains unitarity in the microscopic information release process. The bosonic characteristic of Hawking radiation emerges only after coarse-graining over the large number of channels involved in each macroscopic mode.

The horizon thus acts as a microscopic frequency multiplier: different modes couple to different effective numbers of channels. A large value of $N_j$ corresponds to a strong coupling between the horizon and mode $j$, while a small value indicates a weaker coupling. Consistent with the thermal nature of Hawking radiation, the largest $N_j$ values are assigned to the low-frequency modes (small $\omega_j$), as these modes dominate the energy and entropy budgets of the radiation.

Each mode $j$ possesses a maximum microscopic entropy:
\begin{equation*}
S_{j}^{\text{max}} = N_j \ln 2,
\end{equation*}
which corresponds to the state of maximum uncertainty across its dedicated channels. This ensures that the radiation entropy remains bounded and allows for the emergence of the Page curve with a well-defined information turnaround point. The binary occupation constraint at the channel level guarantees that microscopic unitarity is preserved throughout the evaporation process, while allowing thermal behavior to emerge at the macroscopic level.

Remember that, the instantaneous entanglement entropy of the macroscopic mode $j$, which aggregates the contributions of its independent $N_{j}$ channels, is given by the individual weighted sum of the microscopic binary entropies (instead of the simplistic assumption):
\begin{equation*}
\mathbf{S_{j}(t)=\sum_{a=1}^{N_{j}}\left[ -p_{j,a}(t)\ln p_{j,a}(t)-(1-p_{j,a}(t))\ln (1-p_{j,a}(t))\right] }.
\end{equation*}
where $p_{j,a}(t)$ is the occupation probability of the individual channel $a$ as dictated by the unitary dynamics, and the total radiation entropy is then the sum over all modes:
\begin{equation*}
S_{\text{rad}}^{\text{Total}}(t)=\sum_{j}S_{j}(t)=\sum_{j}\mathbf{\sum_{a=1}^{N_{j}}\left[ -p_{j,a}(t)\ln p_{j,a}(t)-(1-p_{j,a}(t))\ln (1-p_{j,a}(t))\right] }.
\end{equation*}

\section{Quantum Condition Analysis for Unitarity}
To ensure the complete recovery of information from the black hole (i.e., preservation of Unitarity), the final radiation entropy $\mathbf{S_{\text{rad}}^{\text{total}}(t_{\text{evap}})}$ must reach zero.

The total radiation entropy $\mathbf{S_{\text{rad}}^{\text{total}}}$ is the sum of the von Neumann entropy for each microscopic channel $(\mathbf{j,a})$, multiplied by the number of micro-qubits in that channel $\mathbf{N_{j}}$, so we can write the final radiation entropy $\mathbf{S_{\text{rad}}^{\text{total}}(t_{\text{evap}})}$ as follows:
\begin{equation}
\mathbf{S_{\text{rad}}^{\text{total}}(t_{\text{evap}})}=\sum_{j}\sum_{a=1}^{N_{j}}\left[ -p_{j,a}(t_{\text{evap}})\ln p_{j,a}(t_{\text{evap}})-(1-p_{j,a}(t_{\text{evap}}))\ln (1-p_{j,a}(t_{\text{evap}}))\right].
\end{equation}
Where $p_{j,a}(t_{\text{evap}})$ is the probability of excitation of the radiation channel at the end of evaporation.

Now, the total entropy is zero if and only if the entropy of every microscopic channel is zero, and this occurs in only two distinct states, when $\mathbf{p_{j,a}(t_{\text{evap}})=0}$, meaning the channel ends in a pure non-excited state ($\mathbf{|0\rangle }$) and when $\mathbf{p_{j,a}(t_{\text{evap}})=1}$, meaning the channel ends in a pure excited state ($\mathbf{|1\rangle }$).

Since the excitation probability $p_{j,a}(t_{\text{evap}})$ is given by the trigonometric function:
\begin{equation*}
p_{j,a}(t_{\text{evap}})=\sin ^{2}(\theta _{j,a}(t_{\text{evap}})),
\end{equation*}
then to satisfy the condition $\mathbf{p_{j,a}(t_{\text{evap}})=0}$ or $\mathbf{p_{j,a}(t_{\text{evap}})=1}$, the value of $\sin ^{2}(\theta )$ must be $\mathbf{0}$ or $\mathbf{1}$. This directly leads to the quantization of the final angle:
\begin{equation}
\theta _{j,a}(t_{\text{evap}})=n\mathbf{_{j,a}\cdot \frac{\pi }{2}}\quad \text{where }\mathbf{n_{j,a}\in \mathbb{Z.}}
\end{equation}

The final angle for the quantum evolution, $\mathbf{\theta _{j,a}(t_{\text{evap}})}$, is given by:
\begin{equation}
\theta _{j,a}(t_{\text{evap}})=\alpha _{j,a}\int_{0}^{t_{\text{evap}}}\frac{1}{M(\tau )^{2}}d\tau.
\end{equation}

Now, the relation that governs the evolution of the mass $\mathbf{M(t)}$ (Hawking's Evaporation Law) is given by:
\begin{equation}
\frac{dM}{dt}=-\frac{\kappa }{M^{2}},
\end{equation}
where $\kappa $ is the evaporation coefficient, so we have:
\begin{equation}
\frac{M^{3}}{3}=-\kappa t+C
\end{equation}
Where $C$ is the constant of integration, which can be determined easily by the initial condition, when $\mathbf{t=0}$, the mass $M=$ $\mathbf{M_{0}}$, therefore
\begin{equation*}
C=\frac{M_{0}^{3}}{3},
\end{equation*}
So finally we have:
\begin{equation}
M(t)=\left( M_{0}^{3}-3\kappa t\right) ^{1/3}.
\end{equation}

Now, from the condition $\mathbf{M(t_{\text{evap}})\approx 0}$, we have:
\begin{equation*}
(0)^{3}=M_{0}^{3}-3\kappa t_{\text{evap}},
\end{equation*}
so, we have:
\begin{equation}
\kappa =\frac{M_{0}^{3}}{3t_{\text{evap}}}.
\end{equation}

Now, we can easily show that:
\begin{equation}
\int_{0}^{t_{\text{evap}}}\frac{1}{M(\tau )^{2}}d\tau =\int_{0}^{t_{\text{evap}}}\frac{1}{\left( M_{0}^{3}-3\kappa \tau \right) ^{2/3}}d\tau =\mathbf{\frac{M_{0}}{\kappa}}.
\end{equation}
So, by substituting back into the $\theta _{j,a}(\infty )$ equation, we have:
\begin{equation}
\theta _{j,a}(t_{\text{evap}})=\alpha _{j,a}\cdot \left( \frac{M_{0}}{\kappa }\right).
\end{equation}
Since $\theta _{j,a}(\infty )=\mathbf{n_{j,a}\cdot \frac{\pi }{2},}$ we have:
\begin{equation}
\alpha _{j,a}=n_{j,a}\cdot \frac{\pi \kappa }{2M_{0}}.
\end{equation}
From $\kappa =\frac{M_{0}^{3}}{3t_{\text{evap}}}$, we finally arrive to,
\begin{equation}
\mathbf{t_{\text{evap}}}=n_{j,a}\cdot \left( \frac{\pi M_{0}^{2}}{6\alpha _{j,a}}\right).
\end{equation}
This formula links classical constants ($\mathbf{M_{0}}$) with microscopic quantities ($\mathbf{\alpha _{j,a}}$) and the quantum condition ($\mathbf{n_{j,a}}$).

Now, for the black hole to evaporate at a single, well-defined time $\mathbf{t_{\text{evap}}}$, the relationship between $\mathbf{n_{j,a}}$ and $\mathbf{\alpha _{j,a}}$ for all channels must be constant such that their variations cancel out, i.e.:
\begin{equation}
\mathbf{\frac{n_{j,a}}{\alpha _{j,a}}=\text{Constant}}.
\end{equation}
This necessitates that channels with a larger quantum number ($\mathbf{n}$) must possess a larger coupling ($\mathbf{\alpha }$), and vice versa.

If we assume the simple case of complete homogeneity ($\mathbf{\alpha _{j,a}=\alpha }$ and $\mathbf{n_{j,a}=n}$ ), we have:
\begin{equation}
\mathbf{t_{\text{evap}}}=n\cdot \left( \frac{\pi M_{0}^{2}}{6\alpha }\right).
\end{equation}
So, this model provides an expectation of quantization of the evaporation time ($\mathbf{t_{\text{evap}}}$). Here, $n$ represents a positive integer $(1,2,3,...)$ representing the discrete step or multiplicity of the final quantum process required for the unitary evolution to complete. It is often related to the final state's complexity. We define $\mathbf{t_{\text{unit}}=}\left( \frac{\pi M_{0}^{2}}{6\alpha }\right) $ as the smallest unit of time required for the black hole's microscopic degrees of freedom to fully process the information (quantum jump) associated with its evaporation. Finally, $\alpha $ represents the strength of the interaction between the black hole's internal holographic degrees of freedom and the external Hawking radiation channels.

It is worth noting that our derived expression for the total evaporation time, $t_{\text{evap}}=n\frac{\pi M_{0}^{2}}{6\alpha }$, aligns perfectly with the semi-classical Hawking result ($t\propto M_{0}^{3}$) if we consider the microscopic coupling strength $\alpha $ to be a system-dependent parameter. Specifically, by identifying $\alpha $ as an effective coupling that scales inversely with the initial black hole mass, i.e., $\alpha \approx \alpha ^{\prime }/M_{0}$, the evaporation time recovers the characteristic cubic dependence on the initial mass:
\begin{equation*}
t_{\text{evap}}=n\left( \frac{\pi M_{0}^{3}}{6\alpha ^{\prime }}\right).
\end{equation*}
This scaling suggests that larger black holes exhibit a weaker effective coupling between their internal holographic degrees of freedom and the radiation channels, consistent with their lower Hawking temperatures. Crucially, even in this classical limit, our framework introduces a novel quantization of the evaporation process, where the time is not a continuous variable but is restricted by the integer $n$, representing the discrete quantum jumps required for unitary information processing.

\section{Numerical Simulation}
To demonstrate the self-consistency of the proposed model, we developed a numerical simulation algorithm that reproduces information dynamics during the evaporation process. In the Planck units system, we have:
\begin{equation*}
\hbar =c=G=k_{B}=1
\end{equation*}
This framework allows for the treatment of black hole physical properties at their direct quantum scale, where all values are measured in terms of Planck mass ($m_{P}$) and Planck time ($t_{P}$). The following parameters were selected as the reference case for the simulation: initial mass $M_{0}=10$, total evaporation time $t_{\text{evap}}=100$, and the total number of microscopic qubits $N_{\text{total}}=5000$. According to the model, we have:
\begin{equation*}
\kappa =\frac{M_{0}^{3}}{3\cdot t_{\text{evap}}}\approx 3.33
\end{equation*}

In accordance with the Holographic Principle, the total number of qubits $N_{\text{total}}$ is distributed across different frequency modes $\omega _{j}$ using the Planck thermal distribution. The statistical weights for each mode depend on the frequency and the effective temperature according to the equation:
\begin{equation*}
W(\omega _{j})=\frac{\omega _{j}^{2}}{e^{\omega _{j}/T}-1}.\text{ (Here we set }T=1\text{)}.
\end{equation*}
The number of microscopic channels $N_{j}$ allocated to each mode is determined by normalizing these weights to match the total information capacity of the horizon:
\begin{equation*}
N_{j}=N_{\text{total}}\cdot \frac{W(\omega _{j})}{\sum_{k}W(\omega _{k})}.
\end{equation*}
This distribution ensures that low-frequency modes possess the largest share of microscopic degrees of freedom, reflecting the statistical structure of macroscopic Hawking radiation.

To satisfy the Unitarity condition, the microscopic coupling coefficient $\alpha $ is set to a precise value derived from the final phase quantization condition $\theta _{j,a}(t_{\text{evap}})=n_{j,a}\cdot \pi /2$. For the reference case ($n=1$), we find:
\begin{equation*}
\alpha =\frac{\pi \cdot M_{0}^{2}}{6\cdot t_{\text{evap}}}\approx 0.5236.
\end{equation*}
The simulation calculates the instantaneous phase $\theta _{j,a}(t)$ via integration of the microscopic coupling index $g_{j,a}(t)=\alpha /M(t)^{2}$ along the evaporation time path:
\begin{equation*}
\theta _{j,a}(t)=\int_{0}^{t}\frac{\alpha }{M(\tau )^{2}}\,d\tau.
\end{equation*}

Finally, the total radiative Von Neumann Entropy $S_{\text{Rad}}^{\text{Total}}(t)$ is calculated as the statistical sum of the individual microscopic channel entropies, weighted by $N_{j}$:
\begin{equation*}
S_{\text{Rad}}^{\text{Total}}(t)=\sum_{j}\sum_{a=1}^{N_{j}}(-p_{j,a}(t)\ln p_{j,a}(t)-(1-p_{j,a}(t))\ln (1-p_{j,a}(t)))
\end{equation*}
where $p_{j,a}(t)=\sin ^{2}(\theta _{j,a}(t))$ represents the microscopic excitation probability.

\begin{conclusion}
In this research, we have developed an effective framework that illustrates a mechanism through which quantum information preservation can be reconciled with the thermal description of black hole evaporation. It must be emphasized that this model does not claim to fully resolve the information paradox within a complete theory of quantum gravity; rather, it provides a proof of principle regarding how unitarity can be maintained at the fundamental level while an apparent loss of information emerges at the coarse-grained statistical level.

The model is based on two fundamental hypotheses consisting of the Independent Unitary Decomposition into binary channels and the Binary Capacity Constraint motivated from a holographic interpretation of the horizon's information capacity, named in the research as ``Fermion-like Occupancy Bound''. In this work, we have demonstrated that the mechanism based on these two hypotheses allows, through coarse-graining, for the emergence of the macroscopic thermal behavior of Hawking radiation and the attainment of an entropy peak and subsequent decline, characteristic of a Page-like curve, while the microscopic dynamics remains strictly pure.

The central result of this research is the Quantum Condition for Unitarity ($\theta _{j,a}(t_{\text{evap}})=n\mathbf{_{j,a}\cdot \frac{\pi }{2})}$, which illustrates how the requirements of unitarity impose coupled constraints between the microscopic dynamics and macroscopic observables. It should be emphasized that the current model is simplified and relies on strong assumptions, such as channel independence and binary representation. It does not address any geometric complexities or provide a first-principles derivation of the coupling constants $\alpha _{j,a}$ from more fundamental principles.

Nevertheless, it can be regarded as an exploratory platform as it quantitatively embodies the paradox and allows for its simulation, defines the conditions for reconciliation between the microscopic and macroscopic levels, and proposes a potential quantization of the evaporation time. In summary, this work suggests that the potential path toward a resolution may lie in accepting that Hawking's thermal description is an emergent effective description, and that information is preserved in the manifold entanglement between microscopic degrees of freedom that are not directly accessible from the semiclassical description.
\end{conclusion}


\begin{thebibliography}{99}
\bibitem{Bekenstein1973} J. D. Bekenstein, ``Black holes and entropy'', \textit{Phys. Rev. D} \textbf{7}, 2333 (1973).

\bibitem{Hawking1975} S. W. Hawking, ``Particle creation by black holes'', \textit{Commun. Math. Phys}. \textbf{43}, 199--220 (1975).

\bibitem{Wald2001} R. M. Wald, ``The thermodynamics of black holes,'' Living Rev. Relativity \textbf{4}, 6 (2001), arXiv:gr-qc/9912119.

\bibitem{Hawking1976} S. W. Hawking, ``Breakdown of predictability in gravitational collapse'', \textit{Phys. Rev. D} \textbf{14}, 2460 (1976).

\bibitem{Page1993Info} D. N. Page, ``Information in black hole radiation'', \textit{Phys. Rev. Lett.} \textbf{71}, 3743 (1993).

\bibitem{Page1993Avg} D. N. Page, ``Average entropy of a subsystem'', \textit{Phys. Rev. Lett.} \textbf{71}, 1291 (1993).

\bibitem{Almheiri2017Overview} A. Almheiri et al., ``The black hole information paradox: an overview'', \textit{Class. Quantum Grav}. \textbf{34}, 214001 (2017). [arXiv:1707.07702].

\bibitem{Hayden2007} P. Hayden and J. Preskill, ``Black holes as mirrors: quantum information in random subsystems'', \textit{JHEP} \textbf{09}, 120 (2007). [arXiv:0708.4025]

\bibitem{Almheiri2013} A. Almheiri, D. Marolf, J. Polchinski, and J. Sully, ``Black holes: complementarity or firewalls?'', \textit{JHEP} \textbf{02}, 062 (2013). [arXiv:1207.3123].

\bibitem{Penington2020} G. Penington, ``Entanglement wedge reconstruction and the information paradox'', \textit{JHEP} \textbf{09}, 002 (2020). [arXiv:1905.08255].

\bibitem{Almheiri2019} A. Almheiri, N. Engelhardt, D. Marolf, and H. Maxfield, ``The entropy of bulk quantum fields and the entanglement wedge of an evaporating black hole'', \textit{JHEP} \textbf{12}, 063 (2019). [arXiv:1905.08762].

\bibitem{Osuga2016} S. Osuga, D. Page,``Qubit transport model for black hole evaporation'', \textit{Physical Review D}, \textbf{94}(2), 024031 (2016).

\bibitem{Broda2020} B. Broda, ``Unitary toy qubit transport model for black hole evaporation'', \textit{The European Physical Journal C}, \textbf{80}(5), 41, (2020).

\bibitem{Chu2024} C. Chu, ``Fermi model of quantum black hole'', \textit{Phys. Rev. D} \textbf{109}, 065015 (2024).
\end{thebibliography}
\end{document}